\documentclass[10pt,twocolumn]{IEEEtran}

\usepackage{graphics}
\usepackage{graphicx}
\usepackage{grffile}
\usepackage{amsmath}
\usepackage{amsthm}
\usepackage{amssymb}

\newtheorem{thm}{Theorem}

\newtheorem{lem}{Lemma}
\newtheorem{cla}{Claim}
\newtheorem{mydef}{Definition}

\newtheorem{corr}{Corollary}
\newtheorem{prop}{Proposition}
\newtheorem{rem}{Remark}

%opening
\title{Decoding Complexity of Irregular LDGM-LDPC Codes Over the 
BISOM Channels}

\author{Manik Raina and Predrag Spasojevi\'{c} \\
WINLAB, Rutgers University, North Brunswick, NJ, 08901.\\
\texttt{manikraina@gmail.com, spasojev@winlab.rutgers.edu}
}

\begin{document}
\maketitle

\begin{abstract}
An irregular LDGM-LDPC code is studied as a sub-code of an LDPC code with some randomly \emph{punctured} 
output-bits. It is shown that the LDGM-LDPC codes achieve rates arbitrarily 
close to the channel-capacity of the binary-input symmetric-output
 memoryless (BISOM) channel  with bounded \emph{complexity}. The measure of
 complexity is the average-degree (per information-bit) of the check-nodes for the
 factor-graph of the code. A lower-bound on the average degree of the 
check-nodes of the irregular LDGM-LDPC codes is obtained. The bound does not depend on the decoder used at the receiver.  The stability condition for decoding the irregular LDGM-LDPC codes over the binary-erasure channel (BEC) under iterative-decoding with message-passing is described.  
\end{abstract}
\begin{IEEEkeywords}
Bounded-complexity-codes, iterative-decoding, capacity-approaching-codes.
\end{IEEEkeywords}

\let\thefootnote\relax\footnotetext{This research was supported by the National Science Foundation under Grant CNS-07-21888.} 
\section{Introduction}

Two questions guide much of the research in channel-coding: the construction
 of codes that achieve rates arbitrarily close to the capacity of a given
 channel and efficient decoding of these codes. The decoding of 
error-correcting-codes using message-passing over sparse-graphs is
 considered the state-of-the-art. An example of a sparse-graph code is the
 ensemble of low-density parity-check codes \cite{gall}. Consider a 
binary-input symmetric-output memoryless (BISOM) channel with 
channel-capacity $C$. Suppose a code is chosen at random from a given 
code-ensemble and achieves a rate 
$(1-\epsilon)C$, where $\epsilon \in (0,1]$ is the multiplicative
 gap-to-capacity. The study of the encoding and decoding complexity
 of code-ensembles in terms of the capacity-gap $\epsilon$ was proposed by
 Khandekar and McElice \cite{khan}. \par  Low-density parity-check (LDPC)
 codes exhibit remarkable performance under message-passing decoding. This
 performance is attributed to the sparseness of the 
parity-check matrices of these codes. The \emph{density} of a parity-check
 matrix is the number of ones in the parity-check matrix 
per-information-bit. The density is proportional to the number of
 messages passed in one round of iterative-decoding.  A
 lower-bound on the density of a parity-check matrix in terms of the
 multiplicative-capacity-gap $\epsilon$ was obtained in \cite{sason1} and later tightened in \cite{sason}. For a code defined by a 
full-rank parity-check matrix, the lower-bound on the density is $\frac{K_1 + K_2 \log \frac{1}{\epsilon}}{1-\epsilon}$, where $K_1$
 and $K_2$ depend on the channel and not on code parameters. As the rate of code approaches the channel-capacity ($\epsilon \rightarrow 0$),
 the density of the parity-check matrix becomes unbounded. The authors of
 \cite{phister} showed that non-systematic irregular-repeat-accumulate
 (NSIRA) codes could achieve rates arbitrarily close to the channel-capacity
 of a BISOM channel with bounded complexity. The rates close to channel-capacity were achieved by randomly puncturing the information bits of the NSIRA codes indepedently with a probability that depended on the gap to capacity. Recently, the authors of 
\cite{liupar}, \cite{parallel} modeled several communication scenarios using parallel
 channels. This model enables (among other things) the investigation of the
 performance of punctured LDPC codes. The effect of random-puncturing on the
 ensemble of  $(j,k)$ regular LDPC codes was studied in \cite{parallel-hsu};
 an upper-bound on the weight spectrum of the ensemble of LDPC codes in
 question was obtained.  The ensemble of 
low-density generator-matrix/low-density parity-check (LDGM-LDPC) codes was
 studied in \cite{hsu}, \cite{wain}. This ensemble resultes on compounding the LDGM and  LDPC codes. Hsu \cite{hsu}
 proved that codes from the {\it regular} LDGM-LDPC ensemble could achieve rates
 arbitrarily close to the channel-capacity of the BISOM channel with bounded
 graphical complexity. However, the proof of \cite{hsu} assumed: a
 regular LDGM code with rate 1; a regular LDPC code; and, a maximum-likelihood (ML) decoder. No puncturing was employed.
 Pfister and Sason \cite{ara-codes} studied capacity achieving degree-distributions for the accumulate-repeat-accumulate (ARA) codes over the BISOM
 channel. Using a technique called \emph{graph-reduction}, some 
capacity-achieving degree-distributions for accumulate LDPC (ALDPC) codes
 were proposed. ALDPC codes were shown to be LDGM-LDPC codes with a 
2-regular LDGM code. In this work, the upper LDGM code can have any rate
 $R_G \in (0,1]$. Further, the LDPC and LDGM codes can be irregular and the
 requirement for ML decoding is removed.\par This paper obtains lower-bounds on
the complexity of the ensemble of irregular LDGM-LDPC codes at rates arbitrarily
close to the capacity of the binary-input symmetric-output memoryless channel for asymptotic block-lengths. The information-theoretic bounds obtained in this paper do not depend on the type of decoder. The LDGM-LDPC codes are studied as sub-codes of \emph{constrained punctured LDPC codes}. It is shown that if some variable nodes of the constrained punctured LDPC codes are punctured independently with probability $p = 1 - \kappa \epsilon$ (for some constant $\kappa$), the ensemble achieves rates arbitrarily close to channel capacity of BISOM channel with \emph{bounded complexity}. Further, it is shown that that the LDGM-LDPC codes are equivalent to the constrained punctured LDPC codes when $\epsilon \rightarrow 0$ (or $p \rightarrow 1$). The performance of the constrained punctured LDPC codes are studied over the binary-erasure channel under iterative-decoding using message-passing. The stability conditions are derived.\par This paper is organized as follows. Some preliminary topics are introduced in Section \ref{prelims}. LDGM-LDPC codes are modeled as sub-codes of constrained punctured LDPC codes in Section \ref{special}. Performance of the  constrained punctured LDPC codes and bounds on the average degrees of the factor graph are studied in Section \ref{punc-sec}. The stability condition
for these codes over the binary-erasure channel under message-passing decoding is studied in Section \ref{sta-bec}. The paper is concluded in Section \ref{conclusion}.

\section{Preliminaries}
\label{prelims}
In this paper, uppercase, lowercase and bold-uppercase variables represent
 random-variables, realization of random variables and random-vectors 
 respectively. For example, $X$ is a random-variable with a realization $x$
 while $\bf{X}$ is a random-vector. 

        \subsection{LDGM-LDPC Codes}
        \label{ldgm-ldpc-codes}
Regular LDGM-LDPC codes were studied in \cite{hsu,wain}. In this paper,
 irregular LDGM-LDPC codes are studied. Consider the binary random vectors
 ${\bf X_1,X_2}$ of length $n^{[1]}$ and $n^{[2]}$ respectively. The 
LDGM-LDPC code is defined as follows:
\begin{equation}
\label{ldgm-ldpc}
        \mathcal{C} \overset{\Delta}{=} \{ {\bf X_1} :  {\bf X_1} = {\bf X_2}G  ,  {\bf X_2} H^T = {\bf 0} \}  
\end{equation}
where $H$ and $G$ are the random low-density parity-check (LDPC) matrix and
 random low-density generator-matrix (LDGM) respectively. Consider the
 factor-graphs $\cal{G}_H$ and $\cal{G}_G$ represented by the matrices $H$
 and $G$ respectively. 
\begin{figure}[htp]
\centering
\includegraphics[scale=0.3]{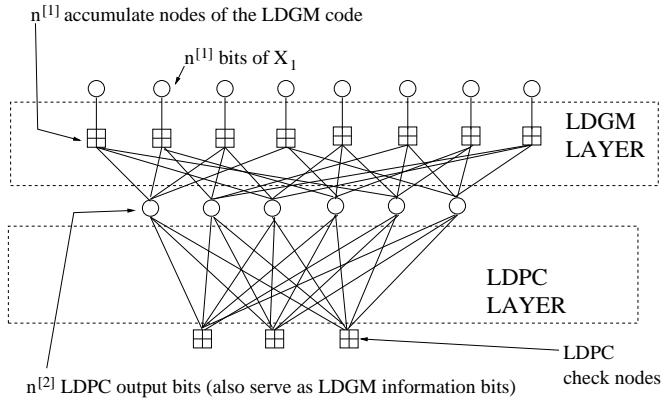}
\caption{The LDGM-LDPC Code}\label{facto}
\end{figure}
Let $\cal{G}_H$ be a $(n^{[2]}, \lambda_H(x),\rho_H(x))$ factor-graph with 
$n^{[2]}$ variable-nodes where we define the following generating-functions:
\begin{align}
\label{lambdah}
\lambda_H(x) &= \sum_{i} \lambda_{H,i} x^{i-1} , &   \rho_H(x) &= \sum_{i} \rho_{H,i} x^{i-1}
\end{align}
where $\lambda_{H,i}$ ($\rho_{H,i}$) is the probability of a randomly chosen
 edge in $\cal{G}_H$ being connected to a variable (check) node of degree 
$i$. Similarly, Let 
$\cal{G}_G$ be a $(n^{[1]}, \lambda_G(x),\rho_G(x))$ factor-graph with 
$n^{[1]}$ accumulated-nodes where we define the following 
generating-functions:
\begin{align}
\label{lambdag}
\lambda_G(x) &= \sum_{i} \lambda_{G,i} x^{i-1} , &   \rho_G(x) &= \sum_{i} \rho_{G,i} x^{i-1}
\end{align}
where $\lambda_{G,i}$ ($\rho_{G,i}$) is the probability of a randomly chosen
 edge in $\cal{G}_G$ being connected to a information (accumulate) node of
 degree $i$. The two factor graphs $\cal{G}_G$ and $\cal{G}_H$ are compounded to form the LDGM-LDPC code as shown in figure \ref{facto}.

\section{LDGM-LDPC Codes As a Special LDPC Code}
\label{special}
In this section, the LDGM-LDPC code as defined in (\ref{ldgm-ldpc}) is
 shown to be a sub-code of an LDPC code.

\begin{lem}
\label{ldgmlem}
Consider the binary vector ${\bf X} = ({\bf X_1}{\bf X_2})$ that results on the
 concatenation of the two binary vectors ${\bf X_1}$ and ${\bf X_2}$ (of lengths $n^{[1]}$ and $n^{[2]}$ respectively),  which satisfy \eqref{ldgm-ldpc}. A new constraint-matrix is defined as follows:
\begin{align}
\label{hmatrix}
\mathcal{H} = \begin{pmatrix} I\ \ \ 0 \\  \\\ G \ \ \ H^T \end{pmatrix} ^T
\end{align}
where $I$ is the $n^{[1]} \times n^{[1]}$  identity matrix, $G$ and $H$ are
 the random LDGM generator-matrix and LDPC parity-check matrix of the LDGM-LDPC code (defined in (\ref{ldgm-ldpc})). Then, the code 
$\mathcal{C}_{\mathcal{H}} \overset{\Delta}{=} \{ {\bf X} : {\bf X} \mathcal{H}^T = {\bf 0} \}$ is a parity-check code with
 parity-check matrix $\mathcal{H}$ which is a \emph{mother code} of the LDGM-LDPC code defined in \eqref{ldgm-ldpc}.
\end{lem}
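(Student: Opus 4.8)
The plan is to show that the codeword space defined by the LDGM-LDPC constraints in \eqref{ldgm-ldpc} corresponds exactly to the null-space of $\mathcal{H}^T$ when restricted to the appropriate coordinates, and that the concatenated vector $\mathbf{X} = (\mathbf{X_1}\,\mathbf{X_2})$ lies in $\mathcal{C}_{\mathcal{H}}$. The key is to write out the condition $\mathbf{X}\mathcal{H}^T = \mathbf{0}$ explicitly by block-multiplying. Let me sketch the algebra I would carry out.

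First I would expand $\mathcal{H}^T$. Since $\mathcal{H}$ is defined as the transpose of the block matrix in \eqref{hmatrix}, I have

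First I would expand $\mathcal{H}^T$. Since $\mathcal{H}$ is defined as the transpose of the block matrix in \eqref{hmatrix}, I have
\begin{align}
\mathcal{H}^T = \begin{pmatrix} I & 0 \\ G & H^T \end{pmatrix}.
\end{align}
Then I would compute the product $\mathbf{X}\mathcal{H}^T = (\mathbf{X_1}\,\mathbf{X_2})\mathcal{H}^T$ by treating $\mathbf{X_1}$ and $\mathbf{X_2}$ as the row-block partition matching the column-blocks of $\mathcal{H}^T$. The block multiplication yields two constraints: the first block gives $\mathbf{X_1}I + \mathbf{X_2}G = \mathbf{X_1} + \mathbf{X_2}G$, and the second block gives $\mathbf{X_1}\cdot 0 + \mathbf{X_2}H^T = \mathbf{X_2}H^T$. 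Setting $\mathbf{X}\mathcal{H}^T = \mathbf{0}$ therefore forces $\mathbf{X_1} + \mathbf{X_2}G = \mathbf{0}$ and $\mathbf{X_2}H^T = \mathbf{0}$. Over $\mathrm{GF}(2)$ the first equation is precisely $\mathbf{X_1} = \mathbf{X_2}G$, and the second is precisely $\mathbf{X_2}H^T = \mathbf{0}$, which are exactly the two defining conditions of the LDGM-LDPC code in \eqref{ldgm-ldpc}.

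Having established this equivalence of constraints, I would conclude that $\mathbf{X} \in \mathcal{C}_{\mathcal{H}}$ if and only if its first component $\mathbf{X_1}$ is a codeword of $\mathcal{C}$ (with $\mathbf{X_2}$ the corresponding satisfying assignment). This shows $\mathcal{C}_{\mathcal{H}}$ is a parity-check code whose codewords, when restricted to the first $n^{[1]}$ coordinates, are exactly the LDGM-LDPC codewords; hence the LDGM-LDPC code is obtained from $\mathcal{C}_{\mathcal{H}}$ by retaining only the $\mathbf{X_1}$ coordinates (puncturing or projecting out the $\mathbf{X_2}$ part), which is the sense in which $\mathcal{H}$ is a \emph{mother code}. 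The notion of "sub-code"/"mother code" would need to be pinned down precisely, but the containment follows immediately once the constraint equivalence is in hand.

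The argument is essentially a direct block-matrix computation, so I do not expect a serious obstacle in the algebra. The one point requiring care is the transpose bookkeeping: the matrix in \eqref{hmatrix} is written with an outer transpose, and $\mathbf{X}$ multiplies $\mathcal{H}^T$ from the left as a row vector, so I must be careful that the block dimensions of $I$ ($n^{[1]}\times n^{[1]}$), $G$, and $H^T$ align consistently with the lengths $n^{[1]}$ and $n^{[2]}$ of $\mathbf{X_1}$ and $\mathbf{X_2}$. The subtlest conceptual step is the final identification of $\mathcal{C}_{\mathcal{H}}$ as a \emph{mother code} rather than merely an isomorphic copy: I would make explicit that every LDGM-LDPC codeword extends uniquely to a codeword of $\mathcal{C}_{\mathcal{H}}$ and conversely, so that the LDGM-LDPC code sits inside $\mathcal{C}_{\mathcal{H}}$ as the projection onto the information coordinates.
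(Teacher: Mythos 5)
Your proof is correct and follows essentially the same route as the paper's: a direct block-multiplication showing that ${\bf X}\mathcal{H}^T={\bf 0}$ is equivalent over $\mathrm{GF}(2)$ to the two constraints ${\bf X_1}={\bf X_2}G$ and ${\bf X_2}H^T={\bf 0}$, followed by the observation that the LDGM-LDPC codewords are exactly the ${\bf X_1}$-coordinates of codewords of $\mathcal{C}_{\mathcal{H}}$. You actually spell out the block algebra and the transpose/dimension bookkeeping more explicitly than the paper does, which states the equivalence without computation.
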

\begin{proof}
It follows from (\ref{hmatrix}) that 
\begin{align*}
\{ ({\bf X_1}{\bf X_2}) \mathcal{H}^T ={\bf 0} \}  \iff \{  {\bf X_1} = {\bf X_2}G  \text{ and }  {\bf X_2} H^T = {\bf 0} \}
\end{align*}
where all arithmetic is over 
${\bf \text{GF}}(2)$.  %Thus the constraints in \eqref{hmatrix} and \eqref{ldgm-ldpc} are equivalent. 
The bits ${\bf X_1}$ are identical to the code bits of the LDGM-LDPC code in \eqref{ldgm-ldpc}. Thus, the code in \eqref{ldgm-ldpc} is a sub-code of the code $\mathcal{C_{\mathcal{H}}}$. The vector ${\bf X}$ is a length $n$ parity-check code with a sparse parity-check matrix $\mathcal{H}$.  $\mathcal{H}$ is the parity-check matrix of a randomly chosen code from the ensemble $(n, \lambda_G(x),\rho_G(x),\lambda_H(x),\rho_H(x))$. 
\end{proof}

\section{Puncturing the Code $\mathcal{C}_{\mathcal{H}}$}
\label{punc-sec}
In this section, a random puncturing scheme is introduced for the code 
$\mathcal{C}_{\mathcal{H}}$ that was defined in lemma \ref{ldgmlem}. Further, the lower-bound on the average density of the irregular LDGM-LDPC ensemble is obtained. Consider a length $n$ codeword ${\bf X} = \{X_1,\hdots,X_n\}$ that is transmitted 
over a BISOM channel. A code bit of ${\bf X}$ is punctured if the output 
at the BISOM channel corresponding to the said code bit is $0$. Some puncturing schemes for codes were proposed in \cite{ha}, which include random puncturing (codeword bits were punctured independently with some probability $p$) or intentional-puncturing (code bits were divided into classes and each class had its own puncturing probability).
\begin{rem}
The codeword ${\bf X}$ is assumed to be uniformly chosen from the code 
$\mathcal{C}_{\mathcal{H}}$. It is assumed
in this paper that all the bits of the codeword are equally likely to be $0$
 or $1$. 
\end{rem}
The result \cite[Proposition 2.1]{parallel} is now restated. It is assumed that every codeword bit in ${\bf X}$ is transmitted though one of the $J$ statistically independent BISOM channels, where $C_j$ is the capacity of the $j$th channel (in bits per channel use) and $p_{Y|X}(.|.;j)$ is the transition probability of the $j$th channel. Let the received message at the channel output be ${\bf  Y}$. The conditional probability-density of the log-likelihood ratio $\log \frac{p_{Y|X}(Y=y|0;j)}{p_{Y|X}(Y=y|1;j)}$ at the output of the $j$th channel given the input is 0 is denoted by $a(.;j)$. Let $\mathcal{I}(j)$ be the set of indices of the code bits  transmitted over the $j$th channel, 
$n^{[j]} \overset{\Delta}{=} |\mathcal{I}(j)|$ be the size of this set, and 
$p_j = \frac{n^{[j]}}{n}$ be the fraction of bits transmitted over the $j$th 
channel. For an arbitrary $c \times n$ parity-check matrix $H$ of the code 
$\mathcal{C}$, let $\beta_{j,m}$ designate the number of indices in 
$\mathcal{I}(j)$ referring to bits which are involved in the $m$th 
parity-check equation of $H$ and let $R_d = 1 - \frac{c}{n}$ be the design
rate of $\mathcal{C}$.

\begin{prop}
\label{propsason}
Let $\mathcal{C}$ be a binary linear block code of length $n$, and assume 
that its transmission takes place over a set of $J$ statistically independent
BISOM channels. Let ${\bf X} = \{X_1,\hdots, X_n\} \text{ and } {\bf Y} = \{Y_1,\hdots, Y_n\}$ designate the transmitted codeword and received sequence
respectively.  Then, the average conditional entropy of the transmitted 
codeword given the received sequence satisfies
\begin{align*}
\frac{1}{n}H({\bf X}|{\bf Y}) \ge 1 - \sum_{j = 1}^J p_j C_j - (1 - R_d) \\
. \Biggl(1 - \frac{1}{2n(1 - R_d)\log 2} \\
\sum_{p = 1}^{\infty} \left\{ \frac{1}{p(2p -1)} \sum_{m = 1}^{n(1 - R_d)} 
\prod_{j = 1}^J (g_{j,p})^{\beta_{j,m}} \right\} \Biggr) 
\end{align*} 
where
\begin{equation}
\begin{split}
\label{defsdefs}
g_{j,p}  & \overset{\Delta}{=} \int_0^{\infty} a(l;j)(1+e^{-l})\text{tanh}^{2p} \Bigl( \frac{l}{2} \Bigr) dl, \\& j \in \{1,\hdots,J\}, p \in \mathbb{N}.
\end{split}
\end{equation}
\end{prop}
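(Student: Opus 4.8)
The plan is to derive the bound by comparing the code $\mathcal{C}$ with the ``mother'' ensemble in which all parity-check constraints are dropped, and then charging the resulting entropy deficit to the individual parity checks. Let $\mathbf{X}_0$ be uniformly distributed over the whole space $\{0,1\}^n$ and let $\mathbf{Y}_0$ be the result of transmitting $\mathbf{X}_0$ over the same collection of $J$ channels. Since $\mathbf{X}_0$ has i.i.d. uniform components and each channel is memoryless and BISOM (so that a uniform input attains $C_j$), the conditional entropy factorises coordinatewise, giving $\frac{1}{n}H(\mathbf{X}_0\mid\mathbf{Y}_0)=1-\sum_{j=1}^J p_j C_j$. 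This already produces the leading two terms of the claimed bound, and what remains is to account for the parity checks.

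Next I would introduce the syndrome $\mathbf{S}=\mathbf{X}_0 H^{T}$ evaluated on the $c=n(1-R_d)$ rows of the parity-check matrix $H$ of $\mathcal{C}$, and apply the chain rule. Because $\mathbf{S}$ is a deterministic function of $\mathbf{X}_0$,
\begin{equation*}
H(\mathbf{X}_0\mid\mathbf{Y}_0)=H(\mathbf{S}\mid\mathbf{Y}_0)+H(\mathbf{X}_0\mid\mathbf{S},\mathbf{Y}_0).
\end{equation*}
The key claim is that $H(\mathbf{X}_0\mid\mathbf{S}=\mathbf{s},\mathbf{Y}_0)=H(\mathbf{X}\mid\mathbf{Y})$ for every achievable syndrome value $\mathbf{s}$: conditioned on $\mathbf{S}=\mathbf{s}$ the vector $\mathbf{X}_0$ is uniform over a coset $\mathbf{v}_{\mathbf{s}}+\mathcal{C}$, and writing $\mathbf{X}_0=\mathbf{v}_{\mathbf{s}}\oplus\mathbf{X}'$ with $\mathbf{X}'$ uniform over the code, the channel symmetry lets one absorb the coset shift $\mathbf{v}_{\mathbf{s}}$ into an entropy-preserving, coordinatewise relabelling of $\mathbf{Y}_0$, reducing the coset to the code itself. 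Establishing this coset invariance rigorously, i.e. exhibiting the output transformation guaranteed by the BISOM property and verifying that it preserves the conditional entropy, is the step I expect to be the main obstacle. Granting it, $H(\mathbf{X}_0\mid\mathbf{S},\mathbf{Y}_0)=H(\mathbf{X}\mid\mathbf{Y})$, so that
\begin{equation*}
H(\mathbf{X}\mid\mathbf{Y})=n\Bigl(1-\sum_{j=1}^J p_j C_j\Bigr)-H(\mathbf{S}\mid\mathbf{Y}_0),
\end{equation*}
and a lower bound on $H(\mathbf{X}\mid\mathbf{Y})$ follows from the upper bound $H(\mathbf{S}\mid\mathbf{Y}_0)\le\sum_{m=1}^{c}H(S_m\mid\mathbf{Y}_0)$ given by subadditivity of entropy.

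Finally I would evaluate each single-check term $H(S_m\mid\mathbf{Y}_0)$ in closed form. In the mother ensemble the bit posteriors are conditionally independent given $\mathbf{Y}_0$, so the parity $S_m=\bigoplus_{i\in\mathcal{N}(m)}X_{0,i}$ over the support $\mathcal{N}(m)$ of the $m$th row has conditional bias $\prod_{i\in\mathcal{N}(m)}\tanh(L_i/2)$, where $L_i$ is the log-likelihood ratio of bit $i$. Using the power-series expansion of the binary entropy function, $h\bigl(\tfrac{1+x}{2}\bigr)=1-\frac{1}{2\log 2}\sum_{p\ge 1}\frac{x^{2p}}{p(2p-1)}$, and taking the expectation over $\mathbf{Y}_0$ replaces $x^{2p}$ by $\mathbb{E}\bigl[\prod_{i\in\mathcal{N}(m)}\tanh^{2p}(L_i/2)\bigr]$; independence across coordinates and channels then factorises this into $\prod_{j=1}^J (g_{j,p})^{\beta_{j,m}}$, once one verifies via the symmetry identity $a(-l;j)=e^{-l}a(l;j)$ that the definition \eqref{defsdefs} is exactly the tanh-moment $\mathbb{E}[\tanh^{2p}(L/2)]$ of the $j$th channel. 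Substituting $H(S_m\mid\mathbf{Y}_0)$ into the previous display, summing over $m$, and collecting the constant $\frac{c}{n}=1-R_d$ into the factored form then yields the stated inequality.
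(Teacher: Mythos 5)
Your derivation is essentially correct, but note first that the paper itself offers no proof of Proposition \ref{propsason}: it is restated verbatim from Proposition 2.1 of \cite{parallel}, so the only meaningful comparison is with that reference. Your route differs in organization from the one in \cite{parallel} (which follows the single-channel arguments of \cite{sason1}, \cite{sason}): there one starts from $H({\bf X}|{\bf Y}) = H({\bf X}) + H({\bf Y}|{\bf X}) - H({\bf Y})$ and upper-bounds $H({\bf Y})$ by splitting each output into magnitude and sign and observing that the syndrome of the sign vector is determined by the noise alone, each syndrome bit having conditional entropy $h\bigl(\tfrac{1}{2}(1+\prod_i \tanh(|L_i|/2))\bigr)$. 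You instead compare against an i.i.d.\ uniform input ${\bf X}_0$, peel off ${\bf S}={\bf X}_0 H^T$ by the chain rule, and use coset invariance to identify $H({\bf X}_0|{\bf S},{\bf Y}_0)$ with $H({\bf X}|{\bf Y})$. Both routes rest on the same two ingredients---subadditivity over the $n(1-R_d)$ checks, and the per-check computation via $h\bigl(\tfrac{1+x}{2}\bigr)=1-\tfrac{1}{2\log 2}\sum_{p\ge 1}\tfrac{x^{2p}}{p(2p-1)}$ combined with the symmetry identity $a(-l;j)=e^{-l}a(l;j)$, which turns $\mathbb{E}[\tanh^{2p}(L/2)]$ into $g_{j,p}$---so the bounds coincide, including the normalization $\tfrac{1}{2n(1-R_d)\log 2}$. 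The coset-invariance step you flag as the main obstacle is indeed the one point requiring care, but it is standard for BISOM channels: the output involution $y\mapsto y^{*}$ with $p(y|1)=p(y^{*}|0)$, applied coordinatewise on the support of the coset leader, is a measure-preserving bijection of the joint law and hence preserves conditional entropy (this also covers rank-deficient $H$, since the argument applies to every achievable syndrome). Your version is arguably more direct and makes transparent why only the check-side profile, through $\beta_{j,m}$, enters the bound.
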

%The codeword ${\bf X} = ({\bf X_1} {\bf X_2}) \in \mathcal{C}_{\mathcal{H}}$ is punctured
%as follows. The bits in ${\bf X_1}$ are not punctured. These bits pass 
%through the original BISOM channel. On the other hand, the bits in 
%${\bf X_2}$ are punctured independently with probability $p$. These bits can
%be visualized as passing through the original BISOM channel compounded with
%a BEC with erasure probability $p$.
\begin{mydef} \label{cph} \emph{Constrained punctured LDPC code} $\mathcal{C}_{\mathcal{H}}(p)$: Let $\mathcal{C}_{\mathcal{H}}$ be a parity-check code defined in lemma \ref{ldgmlem}. If the first $n^{[1]}$ bits of this code (${\bf X_1}$) pass through the channel without puncturing and the last $n^{[2]}$ bits of the code (${\bf X_2}$) are punctured independently with probability $p$, the resulting code is represented by $\mathcal{C}_{\mathcal{H}}(p)$.    
\end{mydef}
The following remark explains why the above punctured LDPC codes are termed "constrained".
\begin{rem}
Let $k_m$ be a random variable representing the number of edges involved in the $m$th parity-check of a given parity-check code. In the bound derived in proposition \ref{propsason}, $\beta_{j,m}$ refers to the number of code bits from the $j$th class that are connected to the $m$th parity-check. For every $m$, it follows that:
\begin{align*}
k_m = \underset{j \in [1,\hdots,J]}{\sum} \beta_{j,m}
\end{align*}
where $J$ is the number of parallel, statistically independent channels. When discussing the code $\mathcal{C}_{\mathcal{H}}(p)$, $J = 2$. The case $j = 1$ corresponds to the $n^{[1]}$ un-punctured bits (variable nodes) ${\bf X_1}$ and $j = 2$ corresponds to the $n^{[2]}$ bits (variable nodes) of ${\bf X_2}$ that are independently punctured with probability $p$. From the structure of the code $\mathcal{C}_{\mathcal{H}}(p)$, each of the first $n^{[1]}$ parity-checks are connected to exactly one variable node from the first class (bits of ${\bf X_1}$), i.e $\beta_{1,m} = 1$, if $m \in [1,n^{[1]}]$. The number of variable nodes connected to the first $n^{[1]}$ check nodes from $j=2$ is $\beta_{2,m} = k_m - 1$, where $k_m$ is distributed as per $\rho_G(.)$ of \eqref{lambdag} (see figure \ref{facto} and \ref{factor-g-pic}). The remaining parity-checks of the code  $\mathcal{C}_{\mathcal{H}}(p)$ are connected to variables nodes from the second class (bits of ${\bf X_2}$) only. Thus if $m \ge n^{[1]}$, $\beta_{1,m} = 0$ and $\beta_{2,m} = k_m$, where $k_m$ is distributed as per $\rho_H(.)$ of \eqref{lambdah}. To summarize:
\begin{equation}
\begin{split}
\label{betas}
        &\beta_{1,m} = \begin{cases} 1 ,   m \in [1, n^{[1]}], \\ 0 ,  \text{ otherwise} \end{cases}  \\
        & \beta_{2,m} = \begin{cases} k_m -1 , m \in [1, n^{[1]}] \\ k_m , \text{ otherwise} \end{cases}
\end{split}
\end{equation}
In parallel LDPC codes of \cite{parallel}, the members of the sequence $\{\beta_{1,m},\hdots, \beta_{J,m}\}$ take on all possible values between $1$ and $k_m$ such that $\sum_j \beta_{j,m} = k_m$. On the other hand, for the code $\mathcal{C}_{\mathcal{H}}(p)$, $\beta_{1,m}$ takes values $0$ or $1$ only. This follows from: the structure of the parity-check matrix $\mathcal{H}$ of the code $\mathcal{C}_{\mathcal{H}}$ (and of $\mathcal{C}_{\mathcal{H}}(p)$); and, the assignment of ${\bf X_1}$ and ${\bf X_2}$ to the two classes of parallel channels. 
\end{rem}
\begin{cla}
Let $\mathcal{C}$ and $\mathcal{C}_{\mathcal{H}}(p)$ represent the codes defined in \eqref{ldgm-ldpc} and definition \ref{cph} respectively. Then,
\begin{align*}
\underset{p \rightarrow 1}{\text{lim}} \mathcal{C}_{\mathcal{H}}(p) = \mathcal{C}
\end{align*}
\end{cla}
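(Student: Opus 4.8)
The plan is to show that as $p \to 1$ every bit of $\mathbf{X_2}$ becomes punctured, so that the only bits actually conveyed through the channel are those of $\mathbf{X_1}$, and then to identify the set of all such $\mathbf{X_1}$ with the LDGM-LDPC code $\mathcal{C}$ of \eqref{ldgm-ldpc} by invoking Lemma \ref{ldgmlem}. First I would recall the puncturing convention adopted in Section \ref{punc-sec}: a punctured bit produces a channel output that carries no information about its value (its log-likelihood ratio is identically zero). Under Definition \ref{cph} each of the $n^{[2]}$ bits of $\mathbf{X_2}$ is punctured independently with probability $p$, while the $n^{[1]}$ bits of $\mathbf{X_1}$ are never punctured. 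For a fixed block length, the probability that the random puncturing pattern deletes every coordinate of $\mathbf{X_2}$ is $p^{n^{[2]}}$, which tends to $1$ as $p \to 1$. Hence in the limit the puncturing pattern almost surely removes all of $\mathbf{X_2}$, and the transmitted (non-punctured) portion of any codeword $(\mathbf{X_1}\,\mathbf{X_2}) \in \mathcal{C}_{\mathcal{H}}$ reduces to its first $n^{[1]}$ coordinates $\mathbf{X_1}$.

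Next I would identify the limiting code with the coordinate projection of $\mathcal{C}_{\mathcal{H}}$ onto the $\mathbf{X_1}$-positions. Deleting the coordinates of $\mathbf{X_2}$ from every codeword of $\mathcal{C}_{\mathcal{H}}$ yields the set
\begin{align*}
\{ \mathbf{X_1} : \exists\, \mathbf{X_2} \text{ with } (\mathbf{X_1}\,\mathbf{X_2}) \mathcal{H}^T = \mathbf{0} \}.
\end{align*}
By Lemma \ref{ldgmlem}, the constraint $(\mathbf{X_1}\,\mathbf{X_2})\mathcal{H}^T = \mathbf{0}$ is equivalent to $\mathbf{X_1} = \mathbf{X_2} G$ together with $\mathbf{X_2} H^T = \mathbf{0}$, so this projection equals $\{ \mathbf{X_1} : \mathbf{X_1} = \mathbf{X_2}G,\ \mathbf{X_2} H^T = \mathbf{0} \}$, which is exactly $\mathcal{C}$ as defined in \eqref{ldgm-ldpc}. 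Combining this with the previous step gives $\lim_{p\to 1}\mathcal{C}_{\mathcal{H}}(p) = \mathcal{C}$.

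The step I expect to be the main obstacle is making the notion of a limit of codes precise, since the puncturing is random and $\mathcal{C}_{\mathcal{H}}(p)$ is therefore a random object for each $p<1$. I would address this by working at the level of transmitted-codeword sets: for every puncturing realization that deletes all of $\mathbf{X_2}$ the resulting code is deterministically the projection above, and the probability of such a realization converges to $1$ as $p\to 1$. A secondary subtlety is that the projection need not be injective on $\mathcal{C}_{\mathcal{H}}$, since several $\mathbf{X_2}$ may give the same $\mathbf{X_1}$; this does not affect the claim, however, because the statement concerns only the set of codewords $\mathbf{X_1}$ and not the encoder map, and the constraint-set equivalence supplied by Lemma \ref{ldgmlem} is an equality of sets that is insensitive to such multiplicity.
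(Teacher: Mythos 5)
Your proposal is correct and follows essentially the same route as the paper, which simply observes that in the limit $p \rightarrow 1$ all bits of ${\bf X_2}$ are punctured so the surviving codeword is ${\bf X_1}$, identical to the codewords of $\mathcal{C}$ in \eqref{ldgm-ldpc}. You merely make the paper's one-line argument more precise by quantifying the probability $p^{n^{[2]}}$ that the entire lower block is punctured and by spelling out the projection-set identification via Lemma \ref{ldgmlem}.
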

The above claim follows from \eqref{ldgm-ldpc} and definition \ref{cph} because in the limit $p \rightarrow 1$, all the bits of the lower LDPC code ${\bf X_2}$ are punctured. The codewords of $\mathcal{C}_{\mathcal{H}}(p)$ are ${\bf X_1}$, which is identical to the codewords of $\mathcal{C}$ of \eqref{ldgm-ldpc}.\par The following lemma relates the code rate and the conditional entropy 
$H({\bf X}|{\bf Y})$ of the code defined in definition \ref{cph}.
\begin{lem}
\label{pblem}
Let $\mathcal{C_{\mathcal{H}}}(p)$ be a code of length $n$  as defined in 
definition \ref{cph}. Let ${\bf X}$ be a binary codeword from  
$\mathcal{C_{\mathcal{H}}}(p)$. Let ${\bf Y}$ be a vector sequence of length 
$n$ at the output of the BISOM channel upon transmission of ${\bf X}$. Then, 
the following inequality holds:
\begin{align}
\label{lemmaplaceholder}
\frac{1}{n}H({\bf X}|{\bf Y}) \le \frac{n^{[2]}}{n}H(P_b)  
\end{align}  
where $P_b$ is the average bit-error probability of decoding the lower LDPC code ${\bf X_2}$  %(defined as $P_b = \frac{1}{n^{[2]}}\sum_{i = 1}^{n^{[2]}}p^i_e$), 
, $n^{[2]}$ is the length of the lower LDPC code ${\bf X_2}$, as defined in 
(\ref{ldgm-ldpc}). 
\end{lem}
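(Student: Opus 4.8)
The plan is to exploit the deterministic relation ${\bf X_1} = {\bf X_2}G$ that is built into the code $\mathcal{C}_{\mathcal{H}}(p)$ in order to collapse the joint conditional entropy onto the lower code alone, and then to apply Fano's inequality coordinate-by-coordinate. First I would write ${\bf X} = ({\bf X_1}{\bf X_2})$ and invoke the chain rule of conditional entropy,
\begin{align*}
H({\bf X}|{\bf Y}) = H({\bf X_2}|{\bf Y}) + H({\bf X_1}|{\bf X_2},{\bf Y}).
\end{align*}
Since ${\bf X_1}$ is a deterministic function of ${\bf X_2}$ (namely ${\bf X_1} = {\bf X_2}G$ over $\text{GF}(2)$), the second term vanishes, giving the key reduction $H({\bf X}|{\bf Y}) = H({\bf X_2}|{\bf Y})$. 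This is the step that makes $n^{[2]}$, rather than $n$, the relevant length in the bound.

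Next I would bound $H({\bf X_2}|{\bf Y})$ per bit. Writing ${\bf X_2} = (X_{2,1},\hdots,X_{2,n^{[2]}})$ and combining the chain rule with the fact that conditioning cannot increase entropy,
\begin{align*}
H({\bf X_2}|{\bf Y}) = \sum_{i=1}^{n^{[2]}} H(X_{2,i}|{\bf Y}, X_{2,1},\hdots,X_{2,i-1}) \le \sum_{i=1}^{n^{[2]}} H(X_{2,i}|{\bf Y}).
\end{align*}
For each bit I would let $\hat{X}_{2,i} = \hat{X}_{2,i}({\bf Y})$ denote the decoder's estimate and $P_{b,i} = \Pr[\hat{X}_{2,i} \ne X_{2,i}]$ its error probability. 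Because $\hat{X}_{2,i}$ is a function of ${\bf Y}$, data processing gives $H(X_{2,i}|{\bf Y}) \le H(X_{2,i}|\hat{X}_{2,i})$, and Fano's inequality for a binary variable (where the $\log(|\mathcal{X}|-1)$ term is zero) yields $H(X_{2,i}|\hat{X}_{2,i}) \le H(P_{b,i})$. Combining, $H(X_{2,i}|{\bf Y}) \le H(P_{b,i})$ for every $i$; since the MAP estimate minimizes the bit-error probability and $H(\cdot)$ is increasing on $[0,\tfrac{1}{2}]$, the inequality holds for the error probability of any decoder, which is precisely why the resulting bound is decoder-independent.

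Finally I would convert the sum of per-bit entropies into a single entropy of the average error probability. Defining $P_b = \frac{1}{n^{[2]}}\sum_{i=1}^{n^{[2]}} P_{b,i}$ and using the concavity of the binary entropy function through Jensen's inequality,
\begin{align*}
\sum_{i=1}^{n^{[2]}} H(P_{b,i}) \le n^{[2]}\, H\!\left(\frac{1}{n^{[2]}}\sum_{i=1}^{n^{[2]}} P_{b,i}\right) = n^{[2]} H(P_b).
\end{align*}
Chaining the three displays and dividing by $n$ yields $\frac{1}{n}H({\bf X}|{\bf Y}) \le \frac{n^{[2]}}{n}H(P_b)$, as claimed. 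The main obstacle I anticipate is not any single inequality but the bookkeeping that makes them compose cleanly: one must ensure that $P_b$ is genuinely the arithmetic average of the marginal bit-error probabilities so that Jensen is applied in the correct direction, and one must verify that each $P_{b,i} \le \tfrac{1}{2}$ (or else fall back on the MAP estimator) so that the monotonicity argument used to pass to an arbitrary decoder remains valid.
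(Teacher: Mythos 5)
Your proposal is correct and follows essentially the same route as the paper's proof: the chain rule together with ${\bf X_1}={\bf X_2}G$ to reduce $H({\bf X}|{\bf Y})$ to $H({\bf X_2}|{\bf Y})$, bit-wise Fano's inequality, and concavity of the binary entropy function to pass to the average bit-error probability $P_b$. Your added remarks on the MAP estimator and data processing only make explicit what the paper leaves implicit in its step $\overset{d}{\le}$.
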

\begin{proof}
\begin{equation}
\begin{split}
\label{useless1}
        &\frac{1}{n}H({\bf X}|{\bf Y}) \overset{a}{=} \frac{1}{n}H({\bf X_1},{\bf X_2}|{\bf Y}) 
\\& \overset{b}{=}\frac{1}{n}H({\bf X_2}|{\bf Y})+ \frac{1}{n}H({\bf X_1}|{\bf Y},{\bf X_2}) \overset{c}{=}\frac{1}{n}H({\bf X_2}|{\bf Y})
\end{split}
\end{equation}
where $\overset{a}{=}$ follows from the definition of ${\bf X}$, $\overset{b}{=}$ 
follows from the chain rule of entropy and $\overset{c}{=}$ follows because
for a given code $\mathcal{C}_{\mathcal{H}}(p)$, the entropy of $\bf{X_1}$ is zero if $\bf{X_2}$ is known (this follows from ${\bf X_1} = {\bf X_2} G$). Further,
\begin{equation}
\begin{split}
\label{useless2}
\frac{1}{n}H({\bf X_2}|{\bf Y}) &\overset{d}{\le} \frac{1}{n}\sum_{i = 1}^{n^{[2]}} h_2(p^i_e) = \frac{n^{[2]}}{n}\frac{1}{n^{[2]}}\sum_{i = 1}^{n^{[2]}} h_2(p^i_e) \\ & \overset{e}{\le} \frac{n^{[2]}}{n} h_2\Bigl(\frac{1}{n^{[2]}}\sum_{i = 1}^{n^{[2]}}p^i_e\Bigr) \overset{f}{=} \frac{n^{[2]}}{n} h_2(P_b) 
\end{split}
\end{equation}
where $\overset{d}{\le}$ follows from the Fano's inequality for binary valued
random variables and where $p^i_e$ is the bit error probability for the $i$th bit of ${\bf X_2}$, $\overset{e}{\le}$ follows from the concavity of the binary
entropy function and $\overset{f}{=}$ follows from the definition of the 
average bit-error probability of ${\bf X_2}$. \eqref{lemmaplaceholder} follows from (\ref{useless1}) and (\ref{useless2}).
\end{proof}
In the following theorem, it is assumed that the code length $n \rightarrow \infty$ and $P_b \rightarrow 0$. An upper-bound on the design-rate for the ensemble of parity-check codes defined in lemma \ref{ldgmlem} is obtained.   
\begin{thm}
\label{ratebound}
Consider a $(n, \lambda_G(x),\rho_G(x),\lambda_H(x),\rho_H(x))$ ensemble as defined in lemma \ref{ldgmlem}. Let ${\bf X}$ be a codeword from the code $\mathcal{C_{\mathcal{H}}}(p)$ that is chosen uniformly from this ensemble. Let the first $n^{[1]}$ bits of ${\bf X}$ pass through a BISOM channel with capacity $C$ without puncturing. The last $n^{[2]}$ bits of ${\bf X}$ pass through the BISOM channel after being punctured independently with probability $p$. Let $p_1 = \frac{n^{[1]}}{n}$, $p_2 = \frac{n^{[2]}}{n}$ (where $p_1 + p_2 = 1$) and let $R_H$ be the design-rate of the lower LDPC code in the LDGM-LDPC code. Further, let $a_L$ and $a_R$ be the average degrees of the accumulate nodes of the LDGM codes and check nodes of the LDPC nodes respectively. Then, the design rate $R_d$ of the ensemble is upper-bounded as:
\begin{align*}
R_d \le 1 - \frac{1 - (p_1 + (1-p)p_2)C}{1 - \frac{1}{2 \log 2} \frac{g_{1,1}p_1 + (1-R_H)p_2 }{p_1 + (1-R_H)p_2} g_{2,1}^{\frac{g_{1,1}p_1 a_L + (1-R_H)p_2 a_R}{g_{1,1}p_1 + (1-R_H)p_2}}}
\end{align*} 
where $g_{1,1}$ and $g_{2,1}$ are defined as per (\ref{defsdefs}).
\end{thm}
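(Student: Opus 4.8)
The plan is to sandwich the normalized conditional entropy $\frac{1}{n}H(\mathbf{X}|\mathbf{Y})$ between the information-theoretic lower bound of Proposition \ref{propsason} and the Fano-type upper bound of Lemma \ref{pblem}, and then let $n\to\infty$ and $P_b\to 0$. Since Lemma \ref{pblem} gives $\frac{1}{n}H(\mathbf{X}|\mathbf{Y})\le \frac{n^{[2]}}{n}h_2(P_b)=p_2\,h_2(P_b)$, the right-hand side vanishes in this limit, so the lower bound of Proposition \ref{propsason} is forced to be non-positive. Setting that lower bound $\le 0$ and solving for $R_d$ is what produces the stated inequality; the bulk of the work is to evaluate the ingredients of Proposition \ref{propsason} for the specific structure of $\mathcal{C}_{\mathcal{H}}(p)$.

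First I would identify the two channel terms. With $J=2$, class $1$ (the unpunctured bits $\mathbf{X_1}$, fraction $p_1$) sees capacity $C$, while class $2$ (the bits $\mathbf{X_2}$, fraction $p_2$, punctured independently with probability $p$ and hence erased with probability $p$) has effective capacity $(1-p)C$, so $\sum_{j}p_jC_j=(p_1+(1-p)p_2)C$, which is exactly the numerator $1-(p_1+(1-p)p_2)C$. Next I would compute $1-R_d$ from the dimensions of $\mathcal{H}$ in \eqref{hmatrix}: $\mathcal{H}$ has $n^{[1]}+c_H$ rows, where the lower LDPC block contributes $c_H=(1-R_H)n^{[2]}$ checks, so $n(1-R_d)=n^{[1]}+(1-R_H)n^{[2]}$ and hence $1-R_d=p_1+(1-R_H)p_2$; this accounts for the factor $p_1+(1-R_H)p_2$ appearing in the denominator.

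The heart of the argument is lower-bounding the double sum in Proposition \ref{propsason}. Since $g_{j,p}\ge 0$ by \eqref{defsdefs} (the integrand is non-negative), I would discard all terms with $p\ge 2$ and keep only $p=1$, giving $S\ge\sum_{m=1}^{n(1-R_d)}g_{1,1}^{\beta_{1,m}}g_{2,1}^{\beta_{2,m}}$. Splitting this over the two families of checks using \eqref{betas}---the $n^{[1]}$ LDGM checks carry $\beta_{1,m}=1$ and $\beta_{2,m}=k_m-1$ with ensemble-mean degree $\mathbb{E}[\beta_{2,m}]=a_L$, while the remaining $(1-R_H)n^{[2]}$ LDPC checks carry $\beta_{1,m}=0$ and $\beta_{2,m}=k_m$ with mean $a_R$---the product over $j$ reduces to $g_{1,1}\,g_{2,1}^{\beta_{2,m}}$ on the first family and $g_{2,1}^{\beta_{2,m}}$ on the second. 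Because $\tanh^2(l/2)\le 1$ and $a(\cdot;j)$ is a symmetric LLR density (so $g_{j,1}\le g_{j,0}=1$), we have $0\le g_{2,1}\le 1$, whence $x\mapsto g_{2,1}^{x}$ is convex; a first application of Jensen's inequality then replaces each random degree by its ensemble mean, giving $\frac{1}{n}\mathbb{E}[S]\ge p_1g_{1,1}g_{2,1}^{a_L}+(1-R_H)p_2\,g_{2,1}^{a_R}$.

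Finally I would merge the two surviving terms by a second convexity step: writing $w_1=g_{1,1}p_1$ and $w_2=(1-R_H)p_2$, convexity of $x\mapsto g_{2,1}^x$ gives $w_1g_{2,1}^{a_L}+w_2g_{2,1}^{a_R}\ge (w_1+w_2)\,g_{2,1}^{(w_1a_L+w_2a_R)/(w_1+w_2)}$, which is precisely the combined term $(g_{1,1}p_1+(1-R_H)p_2)\,g_{2,1}^{E}$ with $E=\frac{g_{1,1}p_1a_L+(1-R_H)p_2a_R}{g_{1,1}p_1+(1-R_H)p_2}$. Substituting this lower bound on $S$ back into the factor $1-\frac{1}{2n(1-R_d)\log 2}S$ upper-bounds that factor, and feeding it into the rearranged inequality $R_d\le 1-\frac{1-\sum_j p_jC_j}{\,1-\frac{1}{2n(1-R_d)\log 2}S\,}$ (valid provided the denominator stays positive) delivers the theorem. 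I expect the main obstacle to be bookkeeping the direction of every inequality---checking that lower-bounding $S$ really yields a valid \emph{upper} bound on $R_d$---together with justifying the interchange of the ensemble average with the pointwise bound of Proposition \ref{propsason} and the legitimacy of both Jensen steps, which hinges on $0\le g_{2,1}\le 1$.
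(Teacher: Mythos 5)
Your proposal is correct and follows essentially the same route as the paper's proof: combine Lemma \ref{pblem} with Proposition \ref{propsason} in the limit $n\to\infty$, $P_b\to 0$, retain only the $p=1$ term of the inner sum, split the checks into the $n^{[1]}$ LDGM constraints and the $(1-R_H)n^{[2]}$ LDPC constraints via \eqref{betas}, apply Jensen's inequality twice to the convex map $x\mapsto g_{2,1}^{x}$, and solve for $R_d$. Your extra care about the direction of the inequalities and the positivity of the denominator is sound and matches what the paper implicitly assumes.
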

The above theorem is proved in the appendix. The above upper-bound on the design-rate of punctured $(n, \lambda_G(x),\rho_G(x),\lambda_H(x),\rho_H(x))$ ensembles (as defined in lemma \ref{ldgmlem}) can be used to obtain a lower-bound on the asymptotic complexity of the code. In the following theorem, the lower-bound is obtained.

\begin{corr}
\label{complexitybound}
In the limiting case of $n \rightarrow \infty$, puncturing the last $n^{[2]}$ bits of a codeword (independently with probability $p$) results in a channel capacity $\overset{-}{C} = (1 - p_2 p)C$, where $C$ is the capacity of the BISOM channel under consideration. Let $a_L$ and $a_R$ be the average degrees of the LDGM accumulate nodes and the LDPC check nodes. Let $R_H$ be the rate of the lower LDPC code. Then, if the design rate of the ensemble $R_d = (1-\epsilon)\overset{-}{C}$, the following lower-bound on $a_L$ and $a_R$ holds:
\begin{align*}
\frac{p_1  g_{1,1} a_L + (1-R_H)p_2 a_R}{p_1  g_{1,1}+ (1-R_H)p_2}  &\ge \\ \frac{\log\Bigl(\frac{1}{2 \log 2} \frac{p_1  g_{1,1} + (1-R_H)p_2}{p_1+(1-R_H)p_2}\frac{1 - (1-\epsilon)\overset{-}{C}}{\epsilon \overset{-}{C}}\Bigr)}{\log(\frac{1}{g_{2,1}})}
\end{align*} 
\end{corr}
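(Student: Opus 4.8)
The plan is to derive the corollary directly from the rate bound of Theorem \ref{ratebound} by purely algebraic manipulation, treating $a_L$ and $a_R$ as the only free quantities and solving the resulting inequality for the exponent that appears in the power of $g_{2,1}$. The first observation I would make is that the capacity $\overset{-}{C} = (1-p_2 p)C$ appearing in the corollary coincides with $(p_1 + (1-p)p_2)C$: since $p_1 + p_2 = 1$, we have $p_1 + (1-p)p_2 = 1 - p p_2$. Hence the numerator $1 - (p_1 + (1-p)p_2)C$ in Theorem \ref{ratebound} is exactly $1 - \overset{-}{C}$, and the entire right-hand side of the theorem can be rewritten in terms of $\overset{-}{C}$, the coefficient $A \overset{\Delta}{=} \frac{p_1 g_{1,1} + (1-R_H)p_2}{p_1 + (1-R_H)p_2}$, and the exponent $E \overset{\Delta}{=} \frac{p_1 g_{1,1} a_L + (1-R_H)p_2 a_R}{p_1 g_{1,1} + (1-R_H)p_2}$ that I ultimately want to bound below.

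Next I would substitute the hypothesis $R_d = (1-\epsilon)\overset{-}{C}$ into the theorem's inequality and rearrange. Writing $1 - (1-\epsilon)\overset{-}{C} = (1 - \overset{-}{C}) + \epsilon \overset{-}{C}$, cross-multiplying to clear the fraction, and cancelling the common factor $1 - \overset{-}{C}$ reduces the statement to the clean form $\frac{1}{2\log 2}\, A\, g_{2,1}^{E} \le \frac{\epsilon \overset{-}{C}}{1 - (1-\epsilon)\overset{-}{C}}$. Taking reciprocals of both sides then isolates $g_{2,1}^{-E}$ and produces $g_{2,1}^{-E} \ge \frac{1}{2\log 2}\, A\, \frac{1 - (1-\epsilon)\overset{-}{C}}{\epsilon \overset{-}{C}}$, whose right-hand side is precisely the argument of the logarithm in the claimed bound.

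Finally I would take logarithms of both sides. Because $g_{2,1} = \int_0^\infty a(l;2)(1+e^{-l})\tanh^2\!\left(\frac{l}{2}\right) dl \le 1$ for a BISOM channel, we have $\log(1/g_{2,1}) > 0$; writing $g_{2,1}^{-E} = e^{\,E \log(1/g_{2,1})}$ and dividing through by $\log(1/g_{2,1})$ therefore preserves the inequality direction and yields exactly the stated lower bound on $E$, i.e. on the weighted average of $a_L$ and $a_R$.

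The step requiring the most care, and the main obstacle, is the cross-multiplication in the second paragraph: it is legitimate only if the denominator $1 - \frac{1}{2\log 2}\, A\, g_{2,1}^{E}$ in Theorem \ref{ratebound} is strictly positive (equivalently, that the rate bound is non-vacuous), and one must simultaneously verify that $\epsilon \in (0,1]$ together with $\overset{-}{C} < 1$ keeps the factor $1 - (1-\epsilon)\overset{-}{C}$ positive, so that neither the cross-multiplication nor the subsequent reciprocal inadvertently reverses an inequality. I expect establishing this positivity, along with confirming $g_{2,1} \le 1$ in the required generality, to be the delicate part, since these facts are exactly what guarantee that the reciprocal and logarithm operations act in the claimed direction.
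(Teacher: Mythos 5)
Your proposal is correct and follows essentially the same route as the paper, whose proof is simply the one-line instruction to substitute $R_d = (1-\epsilon)\overset{-}{C}$ into the bound of Theorem \ref{ratebound} and rearrange; you have just carried out that algebra explicitly (and correctly, including the identification $\overset{-}{C} = (p_1 + (1-p)p_2)C$ and the sign considerations when dividing by $\log g_{2,1} < 0$). The positivity caveats you flag are real but are glossed over by the paper as well, so there is no substantive divergence.
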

\begin{proof}
The design-rate $R_d$ is set to $(1 - \epsilon)\overset{-}{C}$ in the upper-bound of theorem \ref{ratebound} and obtain the above bound. 
\end{proof}
A direct consequence of the above result is that rates arbitrarily close to channel capacity are possible with finite complexity. 
\begin{lem}
\label{pepsilonlem}
Consider a code \ensuremath{\mathcal{C}_{\mathcal{H}}(p)} discussed in corollary \ref{complexitybound}. Then, in the limit $\epsilon \rightarrow 0$, the lower-bound on the average-degrees is finite if the puncturing probability $p = 1 - \kappa \epsilon$, for some constant $\kappa$. 
\end{lem}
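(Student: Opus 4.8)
The plan is to substitute the hypothesis $p = 1 - \kappa\epsilon$ directly into the lower bound of Corollary \ref{complexitybound} and to show that its right-hand side, call it $L(\epsilon)$, stays bounded as $\epsilon \to 0$. Writing
\[
L(\epsilon) = \frac{\log\bigl(B \cdot \frac{1-(1-\epsilon)\overset{-}{C}}{\epsilon\overset{-}{C}}\bigr)}{\log(1/g_{2,1})}, \qquad B \overset{\Delta}{=} \frac{1}{2\log 2}\frac{p_1 g_{1,1}+(1-R_H)p_2}{p_1+(1-R_H)p_2},
\]
I would treat $p_1,p_2,R_H,C$ and the channel quantity $g_{1,1}$ as fixed and track the $\epsilon$-dependence of the numerator and denominator separately. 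The recovered reduced capacity is $\overset{-}{C} = (1 - p_2 p)C = (p_1 + p_2\kappa\epsilon)C$, so $\overset{-}{C}\to p_1 C\in(0,1)$ as $\epsilon\to0$ and $B$ tends to a positive constant.

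The key observation — the step I expect to carry the argument — is that $g_{2,1}$ is itself a function of $p$, hence of $\epsilon$. The second class of bits sees the puncturing mixture channel: an $\mathbf{X_2}$-bit is punctured (zero log-likelihood ratio) with probability $p$ and otherwise observed through the BISOM channel, so its conditional LLR density given input $0$ is $a(l;2) = p\,\delta(l) + (1-p)\,a(l;1)$. Since $\tanh^{2}(l/2)=0$ at $l=0$, the point mass at the origin contributes nothing to the integral in \eqref{defsdefs}, giving $g_{2,1} = (1-p)\,g_{1,1}$. With $p = 1-\kappa\epsilon$ this becomes $g_{2,1} = \kappa\epsilon\,g_{1,1}$, so the denominator is $\log(1/g_{2,1}) = \log(1/\epsilon) + \log\frac{1}{\kappa g_{1,1}} = \log(1/\epsilon) + O(1)$.

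For the numerator I would use that $1-\overset{-}{C}\to 1-p_1C>0$, whence
\[
\frac{1-(1-\epsilon)\overset{-}{C}}{\epsilon\overset{-}{C}} = 1 + \frac{1-\overset{-}{C}}{\epsilon\overset{-}{C}} = \frac{1-p_1 C}{p_1 C}\cdot\frac{1}{\epsilon}\bigl(1+O(\epsilon)\bigr).
\]
Taking the logarithm shows the numerator is also $\log(1/\epsilon) + O(1)$. Dividing the two quantities, both of order $\log(1/\epsilon)$, yields $\lim_{\epsilon\to0} L(\epsilon) = 1$, which is finite and proves the claim. I would also remark, to motivate the precise scaling, that for any fixed $p<1$ the denominator $\log(1/g_{2,1})$ is a constant while the numerator still diverges like $\log(1/\epsilon)$, so the balance between the two logarithms is maintained exactly when $1-p = \Theta(\epsilon)$.

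The main obstacle is really this first move: recognizing that $g_{2,1}$ is not a channel constant but vanishes linearly in $1-p$, and justifying $g_{2,1}=(1-p)g_{1,1}$ carefully from the definition of the punctured-channel LLR density (including the harmless behavior of the delta mass at $l=0$). Once that $p$-dependence is in hand, the remainder is a routine matched-order asymptotic analysis of a ratio of logarithms, and the finiteness of the bound follows immediately.
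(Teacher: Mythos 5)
Your proof is correct and follows essentially the same route as the paper: the key step in both is recognizing that $a(l;2) = p\,\delta_0(l) + (1-p)a(l)$ implies $g_{2,1} = (1-p)g_{1,1}$, so that $p = 1-\kappa\epsilon$ makes $\log(1/g_{2,1})$ grow like $\log(1/\epsilon)$ and cancel the $\log(1/\epsilon)$ divergence of the numerator. In fact your write-up is somewhat more complete than the paper's, which asserts the resulting finiteness without carrying out the matched-order asymptotics of the two logarithms that you supply.
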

\begin{proof}
Since the last $n^{[2]}$ bits of the code \ensuremath{\mathcal{C}_{\mathcal{H}}(p)}  are punctured independently with probability p, the probability density of the log-likelihood ratio (LLR) of those bits is:
\begin{align}
\label{llrpdf}
        a(l;2) = p \delta_0(l) + (1-p) a(l)
\end{align}
where $\delta_0(l)$ is the Dirac delta function at $l = 0$ and $a(l)$ is the density of the LLR over the original (un-punctured) BISOM channel.
First, $g_{2,1}$ is simplified. As per \eqref{defsdefs}, 
\begin{equation*}
\begin{split}
g_{2,1}   & \overset{\Delta}{=} \int_0^{\infty} a(l;2)(1+e^{-l})\text{tanh}^{2} \Bigl( \frac{l}{2} \Bigr) dl \\& \overset{a}{=} \int_0^{\infty} [p \delta_0(l) + (1-p) a(l)](1+e^{-l})\text{tanh}^{2} \Bigl( \frac{l}{2} \Bigr) dl \\&= (1-p)g_1
\end{split}
\end{equation*}
where $\overset{a}{=}$ follows from \eqref{llrpdf} and where $g_1 = \int_0^{\infty} a(l)(1+e^{-l})\text{tanh}^{2} \Bigl( \frac{l}{2} \Bigr) dl$. In the limit $\epsilon \rightarrow 0$, the lower-bound on the complexity in corollary \ref{complexitybound} is finite if and only if $(1-p)g_1 = \eta \epsilon$ forsome constant $\eta$. Thus, it follows that $p = 1 - \kappa \epsilon$, where $\kappa = \frac{\eta}{g_1}$.
\end{proof}
\begin{lem}
The lower-bound on complexity of the LDGM-LDPC code defined in \eqref{ldgm-ldpc} is finite for a BISOM channel.
\end{lem}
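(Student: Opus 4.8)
The plan is to realize the LDGM-LDPC code $\mathcal{C}$ of \eqref{ldgm-ldpc} as the $p\to 1$ limit of the constrained punctured code $\mathcal{C}_{\mathcal{H}}(p)$ and then transfer the finiteness already established for that family. By the Claim, $\lim_{p\to 1}\mathcal{C}_{\mathcal{H}}(p)=\mathcal{C}$, so any bound on the complexity of $\mathcal{C}_{\mathcal{H}}(p)$ that stays finite as $p\to 1$ is automatically a finite bound for $\mathcal{C}$. The object to study is therefore the right-hand side of Corollary \ref{complexitybound} in the regime $p\to 1$, and the task reduces to showing that this right-hand side does not diverge.

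First I would couple the two limits using Lemma \ref{pepsilonlem}. Setting $p=1-\kappa\epsilon$ forces $p\to 1$ exactly when $\epsilon\to 0$, so the limiting code of the Claim and the capacity-approaching regime of the Corollary describe the same object. Under this coupling the simplification $g_{2,1}=(1-p)g_1=\kappa\epsilon\, g_1$ from Lemma \ref{pepsilonlem} applies, while $g_{1,1}$, $p_1$, $p_2$, and $R_H$ stay fixed and $\bar{C}=(p_1+p_2\kappa\epsilon)C\to p_1 C$ remains bounded away from $0$ and $1$. This is precisely the content of Lemma \ref{pepsilonlem}, so most of the work is already done; what remains is to phrase it as a statement about $\mathcal{C}$ rather than about $\mathcal{C}_{\mathcal{H}}(p)$.

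Next I would sketch the asymptotics of the ratio in Corollary \ref{complexitybound} to make the finiteness explicit. The denominator is $\log(1/g_{2,1})=\log(1/\epsilon)+\log(1/(\kappa g_1))$, growing like $\log(1/\epsilon)$. In the numerator the capacity-gap factor $\frac{1-(1-\epsilon)\bar{C}}{\epsilon\bar{C}}=\frac{1-\bar{C}}{\epsilon\bar{C}}+1$ grows like a constant times $1/\epsilon$, so the logarithm of the full argument also grows like $\log(1/\epsilon)$. Both numerator and denominator thus share the leading term $\log(1/\epsilon)$ with matching coefficient, and the ratio converges to a finite limit. Hence the weighted average degree $\frac{p_1 g_{1,1}a_L+(1-R_H)p_2 a_R}{p_1 g_{1,1}+(1-R_H)p_2}$ is lower-bounded by a finite quantity, which is exactly the asserted statement for the LDGM-LDPC code.

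The main obstacle is conceptual rather than computational: one must justify that the $p\to 1$ limit in the Claim and the $\epsilon\to 0$ limit in the Corollary refer to the same limiting code. The coupling $p=1-\kappa\epsilon$ supplied by Lemma \ref{pepsilonlem} is what reconciles the two, and it is simultaneously what forces $g_{2,1}$ to vanish at the same logarithmic rate as the capacity-gap term, preventing the ratio from blowing up. Once this coupling is in place, the balanced $\log(1/\epsilon)$ growth of numerator and denominator delivers the finiteness, in sharp contrast to the unbounded full-rank LDPC density bound $\frac{K_1+K_2\log(1/\epsilon)}{1-\epsilon}$ recalled in the introduction.
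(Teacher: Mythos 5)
Your proposal is correct and follows essentially the same route as the paper: the paper's own proof likewise invokes Lemma \ref{pepsilonlem} under the coupling $p=1-\kappa\epsilon$ and then identifies $\mathcal{C}_{\mathcal{H}}(p)$ with the LDGM-LDPC code $\mathcal{C}$ in the limit $p\to 1$. Your explicit check that numerator and denominator of the bound in Corollary \ref{complexitybound} both grow as $\log(1/\epsilon)$ is a welcome elaboration of what the paper leaves implicit inside Lemma \ref{pepsilonlem}, but it is not a different argument.
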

\begin{proof}
Let $\bf{X = (X_1 X_2)}$ represent a randomly chosen codeword from the code $\mathcal{C}_{\mathcal{H}}(p)$. It follows from lemma \ref{pepsilonlem} that if the design rate of this ensemble $R_d$ approaches capacity ($\epsilon \rightarrow 0$) and the puncturing probability $p$ of the LDPC code bits ${\bf X_2}$ approaches $1$, the average lower-bound on the complexity is bounded. In the limit $n \rightarrow \infty$ and $p \rightarrow 1$, the code words of the code $\mathcal{C}_{\mathcal{H}}(p)$ are ${\bf X_1}$ as all the bits of ${\bf X_2}$ are punctured. Thus, the code words of the code $\mathcal{C}_{\mathcal{H}}(p)$ are identical to the LDGM-LDPC code defined in \eqref{ldgm-ldpc}.      
\end{proof}
\section{Stability Condition for Message Passing Decoding of Punctured LDGM-LDPC Codes Over the BEC}
\label{sta-bec}
In this section, the decoding of the ensemble of $(n, \lambda_G(x),\rho_G(x),\lambda_H(x),\rho_H(x))$ codes is studied. The channel is assumed to be a BEC with an erasure probability of $\delta$. It is assumed that the decoder employs iterative-decoding using message-passing. The density-evolution technique of \cite{de} is employed in this work. The main assumption in density-evolution is that the message on an edge of the factor-graph of a randomly chosen code is independent of the messages on all other edges. This assumption is justified because in the asymptotic case $n\rightarrow \infty$, the fraction of bits involved in finite-length cycles vanishes. The density-evolution (DE) equations are obtained for the $l$th stage of decoding. The fixed-point analysis is performed on the DE equations and the stability-condition for DE is derived. \par Consider the factor-graph of the LDGM-LDPC code in fig. \ref{factor-g-pic}. The $l$th iteration of DE is considered. Let $x_1^l$ ($y_1^l$) be the erasure probability along a random edge from (to) the $n^{[1]}$ un-punctured LDGM channel bit nodes to (from) the LDGM accumulate nodes in the $l$th iteration of message-passing. Further, let $x_2^l$ ($y_2^l$) be the erasure probability along a random edge from (to) the $n^{[2]}$ punctured LDPC variable bit nodes to (from) the LDGM accumulate nodes. Similarly, let $x_3^l$ ($y_3^l$) be the erasure probability along a random edge from (to) the $n^{[2]}$ punctured LDPC variable bit nodes to (from) the LDPC check nodes. Consider the $n^{[1]}$ LDGM variable nodes. The messages from the LDGM variable nodes to the LDGM accumulate constraints is an erasure if the original channel symbol that was received was an erasure and the message from the accumulate constraint to the LDGM node in the $l-1$th erasure was an erasure. This observation is formalized as:
\begin{figure}[htp]
\centering
\includegraphics[scale=0.3]{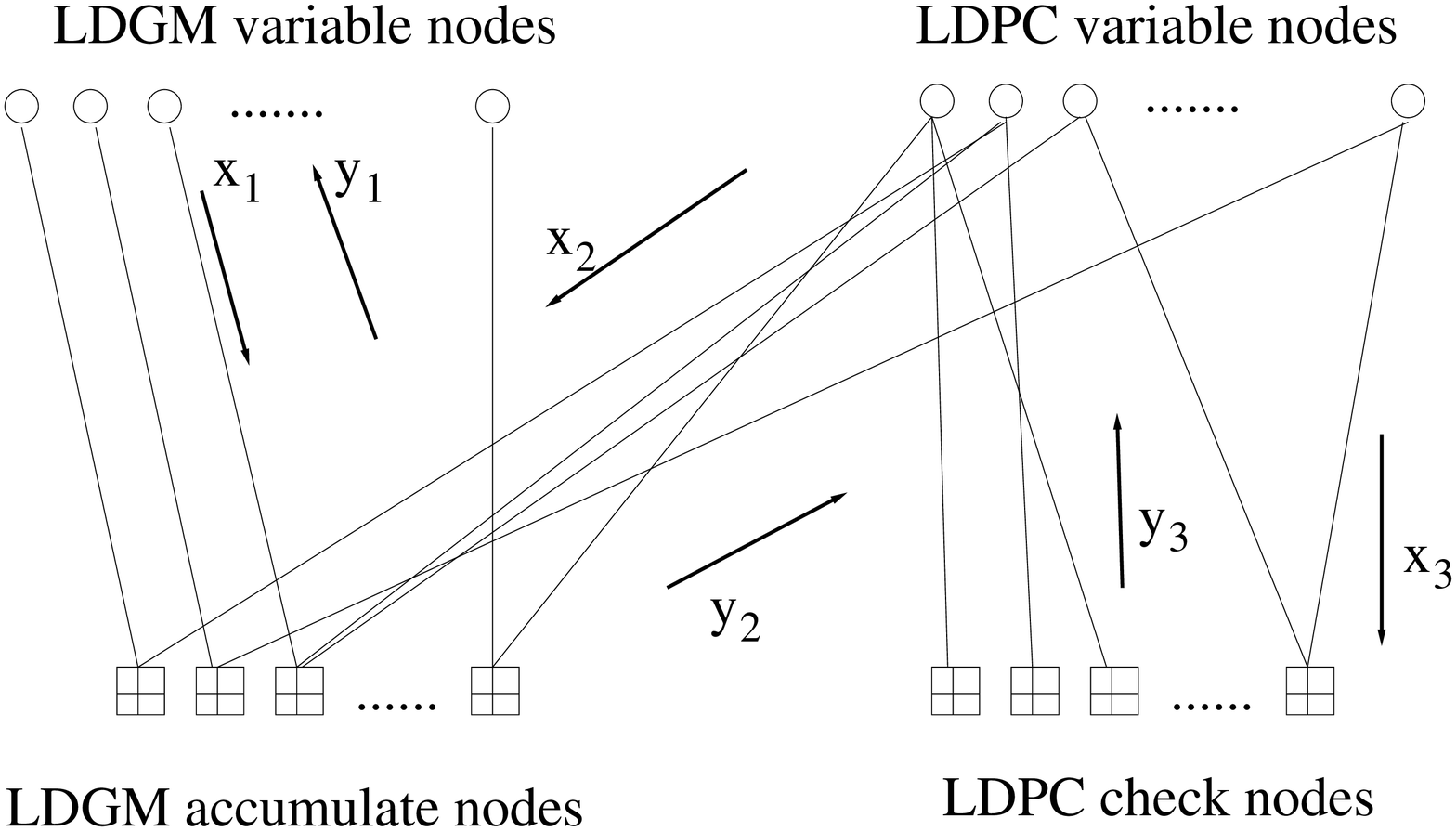}
\caption{The erasure probabilities for the LDGM-LDPC codes}\label{factor-g-pic}
\end{figure}

\begin{align}
\label{de1}
        x_1^l = \delta y_1^{l-1}
\end{align}
Consider the message on a random edge from an LDGM accumulate node to a LDGM variable node. An erasure results when all at least one message from the LDPC variable nodes in the previous iteration were erasures. 
\begin{align}
\label{de2}
y_1^l = 1 - R_{\text{G}}(1- x_2^{l-1})
\end{align}
where $R_{\text{G}}(x) = \frac{\int_0^x \rho_{\text{G}}(t) dt}{\int_0^1 \rho_{\text{G}}(t) dt}$. 
Consider a random edge from an LDPC variable node to the LDGM accumulate node. An erasure results on this edge during the $l$th iteration if all the incoming edges are erasures and the variable node was erased or punctured. Thus,
\begin{align}
\label{de3}
x_2^l = (1 - (1-\delta)(1-p)) \lambda_G(y_2^{l-1})L_H(y_3^{l-1})
\end{align}
where $p$ is the puncturing probability and $L_{\text{H}}(x) = \frac{\int_0^x \lambda_{\text{H}}(t) dt}{\int_0^1 \lambda_{\text{H}}(t) dt}$. The probability along a random edge from a LDGM accumulate node to a LDPC variable node in the $l$th iteration happen if any of the channel outputs are erased in the previous iteration.
\begin{align}
\label{de4}
        y_2^l = 1 - (1-x_1^{l-1})\rho_G(1-x_2^{l-1})
\end{align} 
Along the lines of \eqref{de3}, a randomly chosen edge from an LDPC variable node to an LDPC check-node has an erasure in the $l$th iteration if the variable node experienced an erasure and all incoming edges carried erasure messages in the $l-1$th iteration.
\begin{align}
\label{de5}
        x_3^l = (1 - (1-\delta)(1-p))L_G(y_2^{l-1})\lambda(1-y_3^{l-1})
\end{align}
An erasure along a randomly chosen edge from an LDPC check node to an LDPC variable node happens if any incoming edge has an erasure.
\begin{align}
\label{de6} 
y_3^l = 1 - \lambda_H(1-x_3^{l-1})
\end{align}

\begin{mydef}
The \emph{fixed-points} of density-evolution described in (\ref{de1}-\ref{de6}) are defined as 
\begin{align*}
\text{lim}_{l \rightarrow \infty} x_i^l &= x_i \text{ and } \text{lim}_{l \rightarrow \infty} y_i^l = y_i,& \forall i \in \{1,2,3\}
\end{align*}
\end{mydef}
We solve for $x_2$ and $x_3$ from (\ref{de1}-\ref{de6}) and obtain,
\begin{equation}
\begin{split}
\label{x2x3eq}
x_2 &= [1 - (1-\delta)(1-p)] . \\ &\lambda_G(1 - (1-\delta(1 - R_G(1-x_2)))\rho_G(1-x_2)). \\ &  L_H(1 - \rho_H(1-x_3)) \\
x_3 &= [1 - (1-\delta)(1-p)] . \\ & L_G(1 - (1-\delta(1 - R_G(1-x_2)))\rho_G(1-x_2)). \\ &  \lambda_H(1 - \rho_H(1-x_3)) 
\end{split}
\end{equation}

\begin{thm}
\label{thm3}
Consider the LDGM-LDPC code ensemble as defined in lemma \ref{ldgmlem}. The point $x_2 = 0$ is stable during density-evolution if
\begin{align}
\label{finaleq}
(1 - (1-\delta)(1-p))^2 \lambda_G(0) L_G^{'}(0)\rho_H^{'}(1) \lambda_H(0) L_H^{'}(0) \\ . [ \delta L_G^{'}(1) + \rho_{G}^{'}(1) ] < 1
\end{align}
\end{thm}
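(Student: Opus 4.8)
The plan is to read \eqref{x2x3eq} as a closed one-iteration map $(x_2,x_3)\mapsto(F_2,F_3)$ on the two surviving message-erasure probabilities, obtained from \eqref{de1}--\eqref{de6} after slaving the intermediate quantities $x_1,y_1,y_2,y_3$ to $(x_2,x_3)$. Writing $A\overset{\Delta}{=}1-(1-\delta)(1-p)$, $u(x_2)\overset{\Delta}{=}1-(1-\delta(1-R_G(1-x_2)))\rho_G(1-x_2)$ (the stationary value of $y_2$) and $v(x_3)\overset{\Delta}{=}1-\rho_H(1-x_3)$ (the stationary value of $y_3$), these equations become $F_2=A\,\lambda_G(u(x_2))\,L_H(v(x_3))$ and $F_3=A\,L_G(u(x_2))\,\lambda_H(v(x_3))$. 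First I would confirm that the origin is a fixed point: since $\rho_G(1)=\rho_H(1)=R_G(1)=1$ one has $u(0)=v(0)=0$, and since $L_G(0)=L_H(0)=0$ both $F_2$ and $F_3$ vanish there. Local stability of $x_2=0$ (equivalently of $(x_2,x_3)=(0,0)$) is then governed by the usual criterion that the spectral radius of the Jacobian $J$ of $(F_2,F_3)$ at the origin be strictly below $1$.

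The structural crux is that $J$ is anti-diagonal. Differentiating $F_2$ with respect to $x_2$ leaves the factor $L_H(v(x_3))$ untouched, and this equals $L_H(0)=0$ at the origin; symmetrically $\partial F_3/\partial x_3$ retains $L_G(u(x_2))=L_G(0)=0$. Hence $J_{11}=J_{22}=0$, the eigenvalues are $\pm\sqrt{J_{12}J_{21}}$, and — the product being nonnegative — the spectral-radius condition collapses to the single scalar inequality $J_{12}J_{21}<1$. This is precisely why \eqref{finaleq} is a single product rather than a matrix condition.

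It then remains to evaluate the two off-diagonal entries by the chain rule. At the origin $J_{12}=\partial F_2/\partial x_3=A\,\lambda_G(0)\,L_H'(0)\,v'(0)$ and $J_{21}=\partial F_3/\partial x_2=A\,L_G'(0)\,u'(0)\,\lambda_H(0)$. The outer evaluations $\lambda_G(0),\lambda_H(0),L_G'(0),L_H'(0)$ are immediate, and $v'(0)=\rho_H'(1)$ follows directly from $v(x_3)=1-\rho_H(1-x_3)$. The one nested computation is $u'(0)$: differentiating $u(x_2)=1-(1-\delta+\delta R_G(1-x_2))\rho_G(1-x_2)$ by the product rule and evaluating with $\rho_G(1)=R_G(1)=1$ gives $u'(0)=\delta R_G'(1)+\rho_G'(1)$, the bracketed factor of \eqref{finaleq}. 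Multiplying, $J_{12}J_{21}=A^2\,\lambda_G(0)\,L_G'(0)\,\rho_H'(1)\,\lambda_H(0)\,L_H'(0)\,[\delta R_G'(1)+\rho_G'(1)]$, and imposing $J_{12}J_{21}<1$ reproduces \eqref{finaleq}.

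The step I expect to be the main obstacle is justifying the reduction, not the differentiation. Equation \eqref{x2x3eq} is a self-consistency identity in which $x_2$ reappears on the right through $x_1=\delta(1-R_G(1-x_2))$ and $\rho_G(1-x_2)$, so one must argue that linearizing this collapsed one-round map — rather than the full recursion on all six densities around the all-zero fixed point $(x_1,x_2,x_3,y_1,y_2,y_3)=\mathbf{0}$ — yields the true local dynamics, i.e. that eliminating $x_1,y_1,y_2,y_3$ commutes with linearization. A minor secondary point is verifying $J_{12}J_{21}\ge 0$, so that the passage to $\sqrt{J_{12}J_{21}}$ and the equivalence with the scalar inequality \eqref{finaleq} are both legitimate.
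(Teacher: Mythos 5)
Your proposal is correct in substance and arrives at the same scalar condition as the paper, but by a somewhat different route. The paper does not form the $2\times 2$ Jacobian and bound its spectral radius; instead it writes \eqref{x2x3eq} as $x_2=\psi_A(x_2,x_3)$, $x_3=\psi_B(x_2,x_3)$, slaves $x_3$ to $x_2$ by implicit differentiation of the second equation ($\tfrac{dx_3}{dx_2}=\tfrac{\partial\psi_B/\partial x_2}{1-\partial\psi_B/\partial x_3}$), and requires the resulting effective one-dimensional derivative $\tfrac{\partial\psi_A}{\partial x_2}+\tfrac{\partial\psi_A}{\partial x_3}\tfrac{\partial\psi_B/\partial x_2}{1-\partial\psi_B/\partial x_3}$ to be less than $1$ at the fixed point. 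At the origin, your observation that the diagonal entries vanish (because $L_G(0)=L_H(0)=0$) makes the two criteria coincide: the paper's expression reduces to $J_{12}J_{21}$, exactly your product of off-diagonal entries. Your spectral-radius formulation is the more standard and more defensible one --- the slaving argument is not in general equivalent to local stability of a two-dimensional iteration, whereas for an anti-diagonal nonnegative Jacobian the eigenvalues $\pm\sqrt{J_{12}J_{21}}$ make the equivalence immediate. Your worry about whether eliminating $x_1,y_1,y_2,y_3$ commutes with linearization is legitimate but applies equally to the paper, which also works only from the collapsed equations \eqref{x2x3eq}.

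One concrete discrepancy you should not paper over: your chain rule gives the bracketed factor as $[\delta R_G'(1)+\rho_G'(1)]$, whereas \eqref{finaleq} states $[\delta L_G'(1)+\rho_G'(1)]$, and these differ in general since $R_G'(1)=1/\int_0^1\rho_G(t)\,dt$ while $L_G'(1)=1/\int_0^1\lambda_G(t)\,dt$. Differentiating $u(x_2)=1-(1-\delta(1-R_G(1-x_2)))\rho_G(1-x_2)$ at $x_2=0$ does yield $\delta R_G'(1)+\rho_G'(1)$, so your computation is the one consistent with \eqref{x2x3eq}; the $L_G'(1)$ in the theorem statement appears to be a typo, and the paper's own proof stops at ``evaluating the derivatives'' without displaying them, so it cannot adjudicate. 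As written, your last sentence claims the product ``reproduces \eqref{finaleq}''; state the mismatch explicitly instead.
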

The theorem is proved in the appendix.\par As per lemma \ref{pepsilonlem}, at rates very close to capacity, if the puncturing rate $p = 1 - \kappa\epsilon$, the lower-bound on the complexity is finite as the rates are arbitrarily close to capacity. We study the stability condition when the rates are chosen very close to capacity.
\begin{lem}
When the code rate of the LDGM-LDPC code is arbitrarily close to capacity i.e. $\epsilon \rightarrow 0$, and $p = 1 - \kappa\epsilon$, the stability condition for iterative decoding is
\begin{align*}
\lambda_G(0) L_G^{'}(0)\rho_H^{'}(1) \lambda_H(0) L_H^{'}(0) [ \delta L_G^{'}(1) + \rho_{G}^{'}(1) ] < 1
\end{align*}
\end{lem}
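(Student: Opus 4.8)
The plan is to obtain the claimed condition directly from the finite-length stability condition of Theorem~\ref{thm3} by substituting the capacity-approaching puncturing rate and passing to the limit $\epsilon \to 0$. The inequality \eqref{finaleq} factors as a product of a channel/puncturing prefactor $(1 - (1-\delta)(1-p))^2$ and a term
\[
\Lambda \overset{\Delta}{=} \lambda_G(0)\, L_G^{'}(0)\,\rho_H^{'}(1)\, \lambda_H(0)\, L_H^{'}(0)\,[\,\delta L_G^{'}(1) + \rho_{G}^{'}(1)\,]
\]
that is built entirely from the edge-perspective degree distributions of the graphs $\mathcal{G}_G$ and $\mathcal{G}_H$. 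The first step is to record that $\Lambda$ does not depend on the puncturing probability $p$: each of its factors is an evaluation of one of $\lambda_G, \rho_G, \lambda_H, \rho_H, L_G, L_H$ (or a derivative thereof) at a fixed argument, and these generating functions are fixed features of the chosen ensemble. Hence all of the $\epsilon$-dependence in \eqref{finaleq} is carried by the explicit prefactor.

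Next I would substitute the capacity-approaching puncturing rate $p = 1 - \kappa\epsilon$ supplied by Lemma~\ref{pepsilonlem} into the prefactor. This gives $1 - p = \kappa\epsilon$ and therefore
\[
\bigl(1 - (1-\delta)(1-p)\bigr)^2 = \bigl(1 - (1-\delta)\kappa\epsilon\bigr)^2 .
\]
Taking $\epsilon \to 0$ with the erasure probability $\delta$ and the constant $\kappa$ held fixed, the bracketed quantity tends to $1$, so the prefactor tends to $1$.

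Finally, passing to the limit in \eqref{finaleq} and using that $\Lambda$ is independent of the vanishing parameter, the stability condition $\bigl(1 - (1-\delta)(1-p)\bigr)^2\,\Lambda < 1$ reduces to $\Lambda < 1$, which is exactly the stated inequality. The argument is essentially a one-line limit, so there is no genuine analytic obstacle; the only point that requires care---and that I would state explicitly---is the separation of variables in the first paragraph, namely the claim that the degree-distribution data defining $\Lambda$ is held fixed while $p \to 1$, so that the limit acts only on the explicit prefactor and not on the ensemble itself. If instead one allowed the degree distributions to vary with $\epsilon$ along a sequence of ensembles, one would additionally need their convergence; here they are treated as the fixed parameters in which the limiting condition is expressed, so no such continuity argument is needed.
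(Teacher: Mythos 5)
Your proposal is correct and follows exactly the paper's own (one-line) proof: substitute $p = 1 - \kappa\epsilon$ into \eqref{finaleq}, observe that the prefactor $(1-(1-\delta)\kappa\epsilon)^2 \to 1$ as $\epsilon \to 0$, and conclude that the condition reduces to the degree-distribution term being less than $1$. You simply spell out the limit and the independence of the remaining factors from $p$ more explicitly than the paper does.
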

\begin{proof}
Substituting $p = 1 - \kappa\epsilon$ and $\epsilon \rightarrow 0$ in \eqref{finaleq} proves the above lemma.
\end{proof}

\section{Conclusion}
\label{conclusion}
Irregular LDGM-LDPC codes have been shown to be LDPC code with some randomly punctured bits. The ensemble of  irregular LDGM-LDPC codes have been shown to achieve the capacity of the BISOM channel with bounded complexity. The stability condition for the punctured LDGM-LDPC codes over the BEC under message-passing decoding was obtained.  

\section{Discussion}
This paper obtains lower-bounds on the complexity of the decoding-complexity of irregular LDGM-LDPC codes. These bounds are existential in nature and indicate the existence of LDGM-LDPC codes that achieve rates arbitrarily close to capacity with puncturing. 

\section{Acknowledgments}
The authors would like to thank Yury Polyanskiy for helpful comments and discussions.

\appendix
\subsection{Proof of Theorem \ref{ratebound}}
\begin{proof}
From lemma \ref{pblem}, proposition \ref{propsason}, setting $n \rightarrow \infty$ and $P_b \rightarrow 0$,  
\begin{equation}
\begin{split}
\label{entrzero}
0 \ge 1 - \sum_{j = 1}^J p_j C_j - (1 - R_d) 
 \Biggl(1 - \frac{1}{2n(1 - R_d)\log 2} \\
.\sum_{p = 1}^{\infty} \left\{ \frac{1}{p(2p -1)} \sum_{m = 1}^{n(1 - R_d)} 
\prod_{j = 1}^J (g_{j,p})^{\beta_{j,m}} \right\} \Biggr) 
\end{split}
\end{equation}
By considering the first term of the sum in $p$ in the above equation, the above equation can be bounded as follows:
\begin{equation}
\begin{split}
\label{entrzero2}
0 & \ge 1 - \sum_{j = 1}^J p_j C_j - (1 - R_d) 
 \Biggl(1 - \frac{1}{2n(1 - R_d)\log 2} \\&
.\left\{  \sum_{m = 1}^{n(1 - R_d)} 
\prod_{j = 1}^J (g_{j,1})^{\beta_{j,m}} \right\} \Biggr) 
\end{split}
\end{equation}

Since the first $n^{[1]}$ bits pass through the BISOM channel without puncturing,  $C_1 = C$. Further, since the last $n^{[2]}$ bits of the codeword are punctured, $C_2 = (1-p)C$. Let $c$ be the number of parity checks in the matrix (\ref{hmatrix}). Then, $c = n(1-R_d)$. Due to the structure of the code,  from (\ref{betas}) and (\ref{hmatrix}), for $m \in [1,n^{[1]}]$, $\beta_{1,m} = 1$ and $\beta_{2,m}$ is distributed as $\rho_G(x)$, (defined in (\ref{lambdag})). Further, for $m \in [n^{[1]}+1,c]$, $\beta_{1,m} = 0$ and $\beta_{2,m}$ is distributed as $\rho_H(x)$ (defined in (\ref{lambdah})). We compute the expectation of (\ref{entrzero2}) over the distributions $\rho_G(.)$ and $\rho_H(.)$. The expectation ${\bf E} \Bigl[\sum_{m = 1}^{n(1 - R_d)} 
\prod_{j = 1}^J (g_{j,p})^{\beta_{j,m}}\Bigr]$ is computed as follows:
\begin{equation}
\begin{split}
\label{exp1}
{\bf E}  \sum_{m = 1}^{n(1 - R_d)} \prod_{j = 1}^J (g_{j,p})^{\beta_{j,m}}  &= {\bf E} \sum_{m = 1}^{n^{[1]}} \prod_{j = 1}^J (g_{j,p})^{\beta_{j,m}} + \\&  {\bf E} \sum_{m = n^{[1]}+1}^{c} \prod_{j = 1}^J (g_{j,p})^{\beta_{j,m}} 
\end{split}
\end{equation}
${\bf E} \sum_{m = 1}^{n^{[1]}} \prod_{j = 1}^J (g_{j,p})^{\beta_{j,m}}$ is evaluated as follows:
\begin{equation}
\label{exp2}
\begin{split}
{\bf E} &\sum_{m = 1}^{n^{[1]}} \prod_{j = 1}^J (g_{j,p})^{\beta_{j,m}} \overset{a}{=} n^{[1]} {\bf E}_{\beta_{2,m}} [g_{1,p} g_{2,p}^{\beta_{2,m}}] \\& \overset{b}{=} n^{[1]}  g_{1,p} {\bf E}_{\beta_{2,m}} [g_{2,p}^{\beta_{2,m}}] \overset{c}{\ge} n^{[1]}  g_{1,p} g_{2,p}^{{\bf E}_{\beta_{2,m}} \beta_{2,m}} 
\end{split}
\end{equation}
where $\overset{a}{=}$ follows because $\beta_{1,m}=1$, $\overset{b}{=}$ follows because $g_{1,p}$ is a constant as the expectation is w.r.t. $\beta_{2,m}$, $\overset{c}{\ge}$ follows from the convexity of the function $g_{2,p}^{\beta_{2,m}}$ and the Jensen's inequality. 
${\bf E} \sum_{m = n^{[1]}+1}^{c} \prod_{j = 1}^J (g_{j,p})^{\beta_{j,m}}$ is evaluated as follows:
\begin{equation}
\begin{split}
\label{exp3}
{\bf E} &\sum_{m = n^{[1]}+1}^{c} \prod_{j = 1}^J (g_{j,p})^{\beta_{j,m}} \overset{d}{=}   c_H   {\bf E}_{\beta_{2,m}} [g_{2,p}^{\beta_{2,m}}]  \overset{e}{\ge} c_H   g_{2,p}^{{\bf E}_{\beta_{2,m}} \beta_{2,m}} 
\end{split}
\end{equation}
where $c_H = c - n^{[1]}$ is the number of parity-checks in the lower LDPC layer of the LDGM-LDPC code, where $\overset{d}{=}$ follows because $\beta_{1,m}=0$, $\overset{e}{\ge}$ follows from the convexity of the function $g_{2,p}^{\beta_{2,m}}$ and the Jensen's inequality. From (\ref{exp2}) and (\ref{exp3}), the sum in (\ref{exp1}) becomes
\begin{equation}
\begin{split}
\label{exp4}
{\bf E}  &\sum_{m = 1}^{n(1 - R_d)} \prod_{j = 1}^J (g_{j,p})^{\beta_{j,m}} \ge n^{[1]}  g_{1,p} g_{2,p}^{{\bf E}_{\beta_{2,m}} \beta_{2,m}} + c_H   g_{2,p}^{{\bf E}_{\beta_{2,m}} \beta_{2,m}} \\& \overset{f}{=} n^{[1]}  g_{1,p} g_{2,p}^{a_L} + c_H   g_{2,p}^{a_R}
\end{split}
\end{equation}
where $\overset{f}{=}$ results by replacing the average number of edges to the LDGM accumulate nodes and LDPC check nodes by $a_L$ and $a_R$ respectively.
The right hand side of (\ref{exp4}) is further simplified as follows.
\begin{equation}
\begin{split}
\label{exp5}
n^{[1]}  g_{1,p} g_{2,p}^{a_L} &+ c_H   g_{2,p}^{a_R} =  (n^{[1]}  g_{1,p} + c_H)\Bigl[\frac{n^{[1]}  g_{1,p} }{n^{[1]}  g_{1,p} + c_H}g_{2,p}^{a_L} \\&+ \frac{c_H}{n^{[1]}  g_{1,p} + c_H}g_{2,p}^{a_R}\Bigr] \\& \overset{g}{\ge} (n^{[1]}  g_{1,p} + c_H)g_{2,p}^{\frac{n^{[1]}  g_{1,p} }{n^{[1]}  g_{1,p} + c_H}a_L + \frac{c_H}{n^{[1]}  g_{1,p} + c_H}a_R}
\end{split}
\end{equation}
$\overset{f}{\ge}$ is explained as follows. Consider a random-variable $B$ with a probability distribution defined as:
\begin{equation*}
P_B(b) = \begin{cases} \frac{n^{[1]}  g_{1,p} }{n^{[1]}  g_{1,p} + c_H},&  b = a_L \\ \frac{c_H}{n^{[1]}  g_{1,p} + c_H},& b = a_R \end{cases} 
\end{equation*}
Consider $f(B) = g_{2,p}^B$. As $f(B)$ is convex in $B$, from the Jensen's inequality, $\overset{f}{\ge}$ follows.
From (\ref{entrzero2}-\ref{exp5}), and substituting $n(1-R_d) = c = n^{[1]}+c_H$,
\begin{equation}
\begin{split}
\label{entrzero3}
0 & \ge 1 - \sum_{j = 1}^J p_j C_j - (1 - R_d) 
 \Biggl(1 - \frac{1}{2\log 2} \\&
.\left\{  \frac{n^{[1]}  g_{1,1} + c_H}{n^{[1]}+c_H}g_{2,1}^{\frac{n^{[1]}  g_{1,1} }{n^{[1]}  g_{1,1} + c_H}a_L + \frac{c_H}{n^{[1]}  g_{1,1} + c_H}a_R} \right\} \Biggr) 
\end{split}
\end{equation}
We make the following substitutions in the above equations $c_H = (1-R_H)n^{[2]}$, $n^{[1]} = p_1 n$ and $n^{[2]} = p_2 n$, where $R_H$ is the rate of the lower LDPC code: 
\begin{equation}
\begin{split}
\label{entrzero3}
0 & \ge 1 - \sum_{j = 1}^J p_j C_j - (1 - R_d) 
 \Biggl(1 - \frac{1}{2\log 2} \\&
.\left\{  \frac{p_1  g_{1,1} + (1-R_H)p_2}{p_1+(1-R_H)p_2}g_{2,1}^{\frac{p_1  g_{1,1} a_L + (1-R_H)p_2 a_R}{p_1  g_{1,1}+ (1-R_H)p_2}} \right\} \Biggr) 
\end{split}
\end{equation}
By replacing $C_1 = C$, $C_2 = (1-p)C$ and solving for $R_d$ in the above equation, we obtain the desired bound.
\end{proof}

\subsection{Proof of Theorem \ref{thm3}}
\begin{proof}
The equations \eqref{x2x3eq} can be represented as 
\begin{align*}
x_2 &= \psi_A(x_2,x_3) ,& x_3 &= \psi_B(x_2,x_3)
\end{align*}
Consider a fixed point in the density-evolution $(x_2,x_3) = (x_2^o,x_3^o)$. The above functions can be linearly approximated in the neighbor of the fixed point as follows,
\begin{align}
\label{x2A}
\psi_A(x_2,x_3) = x_2^o + \Biggl[ \frac{\partial \psi_A}{\partial x_2}  + \frac{\partial \psi_A}{\partial x_3} \frac{d x_3}{d x_2}\Biggr](x_2 - x_2^o) + o(x_2 - x_2^o)^2
\end{align} 
Since $x_3 = \psi_B(x_2,x_3)$, taking derivatives on both sides,
\begin{align}
\label{x2B}
\frac{d x_3}{d x_2} = \frac{\partial \psi_B}{\partial x_2} + \frac{\partial \psi_B}{\partial x_3}\frac{d x_3}{d x_2}
\end{align}
Substituting $\frac{d x_3}{d x_2}$ from \eqref{x2B} into \eqref{x2A},
\begin{align*}
\psi_A(x_2,x_3) &= x_2^o + \Biggl[ \frac{\partial \psi_A}{\partial x_2}  + \frac{\partial \psi_A}{\partial x_3} \frac{\frac{\partial \psi_B}{\partial x_2}}{1 - \frac{\partial \psi_B}{\partial x_3}} \Biggr](x_2 - x_2^o) \\+ o(x_2 - x_2^o)^2
\end{align*}
For stability, $\Biggl[ \frac{\partial \psi_A}{\partial x_2}  + \frac{\partial \psi_A}{\partial x_3} \frac{\frac{\partial \psi_B}{\partial x_2}}{1 - \frac{\partial \psi_B}{\partial x_3}} \Biggr] < 1$. Evaluating the derivatives and substituting $x_2^o = x_3^o = 0$, the result is obtained.
\end{proof}

\end{document}